\documentclass[12pt,a4paper]{article}
\usepackage[utf8]{inputenc}
\usepackage{eqnarray,amsmath}
\usepackage{amsmath}
\usepackage{amsthm}
\usepackage{amsfonts}
\usepackage{amsmath,bm,bbm}
\usepackage{multirow,array}
\usepackage{amssymb}
\usepackage{graphicx}
\usepackage{quoting}
\quotingsetup{font=small}
\usepackage[left=2cm,right=2cm,top=3cm,bottom=3cm]{geometry}
\usepackage[italian,english]{babel}
\usepackage{setspace}
\usepackage[authoryear,round]{natbib}
\usepackage{lipsum}
\usepackage[nodvipsnames]{color}
\usepackage{verbatim}
\usepackage{natbib}
\usepackage{tikz}
\usetikzlibrary{arrows.meta}
\usetikzlibrary{matrix,arrows}
\usepackage{subfig}
\usepackage{xcolor}
\usepackage{natbib} 
\usepackage{booktabs}  

\usepackage{enumerate}

\usepackage{url}					
\usepackage[pagebackref=false, linktoc=all]{hyperref}
\hypersetup{colorlinks = true, unicode=true,  urlcolor=blue, linkcolor=blue, filecolor=blue, citecolor=blue}    

\usepackage{verbatim}

\usetikzlibrary{calc}

\newtheorem{theorem}{Theorem}

\newtheorem{lemma}[theorem]{Lemma}
\newtheorem{proposition}[theorem]{Proposition}

\newtheorem{definition}{Definition}

\begin{document}

    \title{Truthful Communication and Exclusive Information Clubs}

\author{Paolo Pin\thanks{Department of Economics, Universit\`a di Siena \& BIDSA, Universit\`a Bocconi. Email: paolo.pin@unisi.it}}
	
\maketitle
	
\vspace*{-\baselineskip}

\begin{abstract}
This paper studies how the possibility of strategic misreporting shapes endogenous
communication networks. Agents observe noisy private signals about a common state,
form costly communication links, exchange private messages with their neighbors, and
then choose actions. Payoffs reward both accuracy and coordination with linked agents.
A link is valuable because it gives access to information, but it is useful only if the
induced local information structure makes truthful transmission incentive compatible.
We show that clique components support truthful communication: within a clique, all
members observe the same profile of local signals, choose the same posterior action, and
therefore have no incentive to distort reports. With heterogeneous signal precisions and
convex linking costs, the core selects assortative information clubs ordered by signal
precision. These stable truthful networks need not be socially efficient. Because the
informational value of precision is decreasing, concentrating high-precision agents in the
same club may be privately stable but socially dominated by more mixed partitions.
\end{abstract}

\noindent\textbf{Keywords:} truthful communication; information clubs; endogenous networks; assortativity; strategic communication; information externalities.

\bigskip

\noindent\textbf{JEL Classification:} C71; C72; D82; D85.

\section{Introduction}
\label{sec:intro}

Why do informed agents sometimes communicate in closed groups? In many economic
environments, information is valuable only if it can be transmitted credibly. Researchers
exchange preliminary ideas with selected colleagues; firms share information within
restricted alliances; traders and professionals rely on trusted circles rather than open
communication. These arrangements look like exclusive information clubs: links are
formed not only to access signals, but also to preserve the incentives that make signals
truthfully transmissible. This paper studies this mechanism in a networked communication
game with endogenous links.

Agents form a communication
network, observe noisy private signals about a common state,  exchange private messages with their neighbors, and then choose actions.
Payoffs reward both accuracy with respect to the state and coordination with linked
agents. A link is therefore valuable for two reasons. First, it gives access to another
agent's signal, improving inference about the state. Second, it can improve coordination
by aligning actions across linked agents. But links also create a strategic problem. Since
messages affect the receiver's beliefs and actions, a sender may prefer to distort her
report in order to move the receiver's action closer to her own. Thus, communication
links are not simply information pipes: their value depends on whether the local
information structure induced by the network makes truthful reporting incentive
compatible.

The first contribution of the paper is to identify a simple architecture that solves this
problem. Clique components support truthful communication. If a connected component
is a clique, then, under truthful reporting, every member observes the same profile of
local signals. Hence all members have the same posterior information and choose the
same action. In such an environment, a unilateral misreport does not give the sender a
useful informational advantage over the receiver; it only moves the receiver's action away
from the action induced by the common information set. Truthful reporting is therefore
sequentially optimal on every link inside the clique. In this sense, a clique is not only a
dense communication group. It is an institutional arrangement that makes truthful
communication self-enforcing.

The second contribution is to study how such truthful communication structures can
arise endogenously. Agents differ in the precision of their private signals. High-precision
agents are more valuable communication partners, because their signals generate larger
reductions in posterior uncertainty. At the same time, the marginal value of adding a new
signal to a clique decreases with the precision already accumulated by that clique. With
convex linking costs, these two forces generate exclusive information clubs. Starting from
the most precise agents, cliques expand as long as the marginal informational value of an
additional member exceeds the marginal linking cost. Once the next agent is not worth
adding, the clique closes, and the next club is formed by the most precise agents who
remain outside. The resulting stable truthful architecture is an assortative partition into
cliques ordered by signal precision.

This perspective separates three notions that are often conflated: truthful communication,
strategic stability, and social efficiency. Clique architectures are attractive because they
support truthful communication and can be stable under decentralized link formation.
They need not be socially efficient. When an agent joins a clique, her signal benefits all
members of the clique, but individual linking decisions do not fully internalize this
informational externality. Stable clubs may therefore be too small. Moreover, because
information has diminishing marginal value, concentrating high-precision agents in the
same club may be privately stable but socially dominated by more mixed partitions that
spread precise signals across groups. Thus, the same force that sustains truthful
communication can generate inefficient informational segregation.

The paper relates first to the literature on Gaussian information aggregation and
coordination. In models such as \citet{MyattWallace2011}, agents use noisy signals and
equilibrium actions are linear functions of available information. Our continuation game
has the same Gaussian-quadratic structure once communication is truthful. The key
difference is that truthful information sharing is not imposed. It is an equilibrium
property, and its feasibility depends on the network. This distinction matters because the
relevant information structure is local: agents do not observe all signals, and incentives
to reveal a signal depend on how the sender's and receiver's information sets compare.

Second, the paper relates to strategic communication in networks. In
\citet{GaleottiGhiglinoSquintani2013}, communication affects beliefs and actions through
the network, and agents may have incentives to distort information. We share the view
that network position matters for communication incentives, but ask a different question:
which endogenous network architectures make truthful local communication sustainable?
The answer developed here is that clique components align local information sets and
therefore remove the informational asymmetry that creates distortion incentives.

Third, the paper contributes to the literature on endogenous network formation and
information exchange. \citet{BillandBravardKamphorstSarangi2017} study networks
formed to obtain and confirm information, while \citet{KushnirNichifor2014} analyze how
agents form connections and then choose how to share private information. These papers
show that informational benefits shape stable network architectures. The present paper
adds a strategic-transmission constraint: a link is valuable only if the information it
carries can be transmitted truthfully in the continuation game.

The paper is also related to reduced-form models of network formation with
component-level externalities. \citet{erol_vohra_2022} study a setting in which
agents' payoffs depend on their own degree and on the number of links in their
connected component. They show that negative link externalities lead to disjoint
cliques, while positive link externalities can support complete or star networks.
Our mechanism is different. In the present model, clique formation is not driven
by reduced-form externalities from links in the component, but by the incentive
compatibility of truthful information transmission: cliques equalize local
information sets and thereby remove incentives to distort reports.

The paper is also connected to models of communication on fixed information networks.
\citet{deMartiMilan2019} study a regime-change game in which agents truthfully share
private signals with their neighbors, so that the network determines the precision and
correlation of posterior beliefs. Here, by contrast, truthful communication is endogenous:
the network must not only transmit information, but also support incentives for truthful
reporting. This shifts attention from the consequences of a given information network to
the compatibility between network formation and strategic communication.

Finally, the network-formation mechanism is related to work in which quadratic
coordination motives generate endogenous social structure. In
\citet{bolletta_pin_2025}, agents choose links and update opinions dynamically; complete
components emerge because agents value coordination with similar peers. The present
paper studies a static information-transmission counterpart. Clique components arise not
only because they reduce disagreement, but because they equalize local information sets
and thereby sustain truthful communication. The mechanism is also close in spirit to the
assortative-clique logic in \citet{allmis_pin_vegaredondo_2026}. There, agents sort into
ordered cliques according to their ex--ante probability of becoming informed. Here, the
ranking variable is signal precision, and clique formation is driven by the need to make
truthful information transmission incentive compatible.

The rest of the paper is organized as follows. Section \ref{sec:model} introduces the
Gaussian communication game, the timing, and the network-formation environment with
convex linking costs. Section \ref{sec:truthful} studies the continuation game and proves
that clique partitions support truthful communication. Section \ref{sec:formation}
derives the assortative clique-formation result and discusses efficiency. Section
\ref{sec:conclusion} concludes.
Appendix \ref{app:proofs} contains the formal proofs of all the results.

\section{Model}
\label{sec:model}

Let \(N=\{1,\ldots,n\}\), with \(n\geq 2\), be a finite set of agents. Agents face
an unknown state
\[
\theta\sim \mathcal N(0,\sigma_\theta^{-1}),
\qquad \sigma_\theta>0.
\]

A network is an undirected graph
\[
g\subseteq \{\{i,j\}:i,j\in N,\ i\neq j\}.
\]
We write \(ij\in g\) when agents \(i\) and \(j\) are linked. Agent \(i\)'s set of
neighbors is
\[
N_i(g)=\{j\in N:ij\in g\},
\]
and her closed neighborhood is
\[
\bar N_i(g)=N_i(g)\cup\{i\}.
\]
The degree of agent \(i\) is denoted by
\[
d_i(g)=|N_i(g)|.
\]

Links are costly. If agent \(i\) maintains \(d_i(g)\) links, she pays
\(c(d_i(g))\), where
\[
c:\{0,\ldots,n-1\}\to \mathbb R_+.
\]
We normalize \(c(0)=0\). The cost function is increasing and weakly convex in the
discrete sense. Let
\[
\Delta c(d)=c(d+1)-c(d).
\]
We assume
\[
\Delta c(d)\geq 0
\quad\text{and}\quad
\Delta c(d+1)\geq \Delta c(d)
\]
whenever these expressions are well defined. Thus the marginal cost of links is
weakly increasing in degree. Strict convexity is not imposed.

The timing is as follows.

\begin{enumerate}
    \item \textbf{Network formation.}
    Agents form an undirected communication network \(g\). Link formation requires
    mutual consent and is governed by the stability notion defined below.

    \item \textbf{Information.}
    Each agent \(i\) observes a private signal
    \[
    x_i=\theta+\eta_i,
    \qquad
    \eta_i\sim \mathcal N(0,\tau_i^{-1}),
    \]
    where \(\tau_i>0\). Signal errors are independent across agents and independent
    of \(\theta\). The precision profile
    \[
    \tau=(\tau_1,\ldots,\tau_n)
    \]
    is exogenous and common knowledge.

    \item \textbf{Communication.}
    Agents communicate locally. Each agent \(i\) sends a private message to each
    neighbor \(j\in N_i(g)\). A message strategy for agent \(i\) is a measurable
    function
    \[
    m_i:\mathbb R\to \mathbb R^{|N_i(g)|},
    \]
    where \(m_{ij}(x_i)\) is the message sent from \(i\) to \(j\). Messages are
    private: agent \(j\) observes \(m_{ij}(x_i)\), while agents other than \(j\) do
    not.

\item \textbf{Actions.}
After observing her own signal, the messages she sends, and the messages received
from her neighbors, each agent chooses an action \(a_i\in\mathbb R\). An action
strategy is a measurable function
\[
\alpha_i:\mathbb R \times \mathbb R^{|N_i(g)|}\times \mathbb R^{|N_i(g)|}\to \mathbb R,
\]
where \(\alpha_i(x_i,m_i^{out},m_i^{in})\) maps the agent's own signal,
the vector of messages sent to her neighbors, and the vector of messages
received from her neighbors into an action.

    \item \textbf{Payoffs.}
    Payoffs are realized after actions are chosen. If \(d_i(g)>0\), agent \(i\)'s
    payoff is
    \[
    U_i(a,\theta,g)
    =
    -(a_i-\theta)^2
    -
    \frac{\gamma}{d_i(g)}
    \sum_{j\in N_i(g)}(a_i-a_j)^2
    -
    c(d_i(g)).
    \]
    If \(d_i(g)=0\), the coordination term is defined to be zero, so that
    \[
    U_i(a,\theta,g)=-(a_i-\theta)^2.
    \]
    The parameter \(\gamma>0\) measures the importance of coordination with linked
    agents. The first term rewards accuracy with respect to the state, the second
    term rewards coordination with neighbors, and the last term is the cost of
    maintaining links.
\end{enumerate}

We use the following equilibrium notions.

\begin{definition}[Continuation game]
Fix a network \(g\) and a precision profile \(\tau\). The continuation game induced by
\((g,\tau)\) consists of the information, communication, and action stages that follow
network formation.
\end{definition}

Throughout, continuation equilibria are Perfect Bayesian equilibria.

\begin{definition}[Truthful continuation equilibrium]
A continuation equilibrium of the game induced by \((g,\tau)\) is truthful on \(g\) if,
for every link \(ij\in g\),
\[
m_{ij}(x_i)=x_i
\]
for all signal realizations \(x_i\).
\end{definition}

\begin{definition}[Core]
An outcome, consisting of a network and continuation strategies, is in the core if there
is no coalition \(B\subseteq N\) and no feasible deviation by the members of \(B\) to an
alternative network and continuation strategies such that all agents in \(B\) are weakly
better off and at least one agent in \(B\) is strictly better off.
\end{definition}

All comparisons in the core are in terms of ex--ante expected payoffs, before signals are
realized. A deviation by a coalition may involve both changing the links among its members
and changing the continuation strategies used within the deviating coalition.

We solve the model by backward induction. First, for a fixed network \(g\), we study the
continuation game and identify network structures that sustain truthful communication.
Second, we use these truthful continuation outcomes to analyze which communication
structures survive coalitional deviations.

\section{Continuation Equilibria and Truthful Communication}
\label{sec:truthful}

We study the continuation game for a fixed network \(g\) and precision profile
\(\tau\). The network has already been formed. The question is which communication
equilibria arise and, in particular, when truthful communication can be sustained.

\subsection{Babbling}

\begin{proposition}[Babbling equilibrium]
\label{prop:babbling}
For every network \(g\) and precision profile \(\tau\), the continuation game
admits an uninformative (babbling) equilibrium.
\end{proposition}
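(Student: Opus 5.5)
We construct the babbling equilibrium explicitly and check it is a Perfect Bayesian equilibrium. Every sender $i$ uses a message strategy that does not depend on the signal, say $m_{ij}(x_i)=0$ for all $j\in N_i(g)$ and all $x_i$. Receivers hold the belief that incoming messages carry no information: for any received message (on or off path), agent $j$'s posterior about $\theta$ and about $x_i$ equals the one given by her own signal $x_j$ alone. Given these beliefs, we pick action strategies that depend only on the agent's own signal, $\alpha_i(x_i,\cdot,\cdot)=\beta_i x_i$. The coefficients $\beta_i$ are determined by the best-response conditions below.

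\textbf{Step 1 (actions).} Fix the linear strategies $a_j=\beta_j x_j$ of the neighbours. Agent $i$'s expected payoff is strictly concave and quadratic in $a_i$. If $d_i>0$, her first-order condition gives
\[
a_i=\frac{1}{1+\gamma}\,\mathbb E[\theta\mid x_i]+\frac{\gamma}{1+\gamma}\,\frac{1}{d_i}\sum_{j\in N_i}\beta_j\,\mathbb E[x_j\mid x_i].
\]
In the Gaussian model, $\mathbb E[\theta\mid x_i]=\mathbb E[x_j\mid x_i]=\kappa_i x_i$ with $\kappa_i=\tau_i/(\sigma_\theta+\tau_i)\in(0,1)$. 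The best response is therefore linear in $x_i$, and the coefficients solve the linear system
\[
\beta_i=\kappa_i\Big(\tfrac{1}{1+\gamma}+\tfrac{\gamma}{1+\gamma}\,\tfrac{1}{d_i}\sum_{j\in N_i}\beta_j\Big),
\]
with $\beta_i=\kappa_i$ for isolated agents. This is the step that needs an actual argument: we must show a solution exists. The map $T$ defined by the right-hand side is an affine contraction in the sup norm. Each row of the averaging matrix $W$ (with $W_{ij}=1/d_i$ for neighbours) sums to at most $1$, so the Lipschitz constant of $T$ is at most $\max_i \kappa_i\gamma/(1+\gamma)<1$. Equivalently, $I-\mathrm{diag}(\kappa)\tfrac{\gamma}{1+\gamma}W$ is invertible. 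Banach's fixed point theorem then gives a unique $\beta$, and in fact $\beta\in(0,1)^n$. Since every agent best-responds pointwise to her own information set, the action stage is optimal after every history, including off-path messages and the agent's own sent messages. This holds because the beliefs treat all messages as uninformative.

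\textbf{Step 2 (messages and consistency).} Under these action strategies, no receiver's action depends on the messages she gets. A sender's deviation in messages therefore changes nobody's action, so it cannot change her payoff, and any message strategy, in particular the constant one, is optimal. The beliefs are consistent with Bayes' rule on path, because constant messages are uninformative. Off-path messages may be assigned the same uninformative beliefs, which PBE permits. Putting the steps together, the profile is a PBE of the continuation game for every $(g,\tau)$, as claimed.
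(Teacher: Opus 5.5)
Your proposal is correct and follows the same route as the paper. Both use constant messages, beliefs that treat every message (on or off path) as uninformative, and own-signal action rules, and both observe that message deviations cannot move anyone's action. The only difference is that you explicitly prove the own-signal action equilibrium exists, via a sup-norm contraction on the coefficients $\beta_i$; the paper simply asserts this step, which is the analogue of the contraction argument in its proof of Lemma~\ref{lem:linear_truthful}.
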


The proposition implies that the continuation game always admits an uninformative
equilibrium. In particular, whenever another equilibrium is feasible, the continuation
game exhibits multiplicity.

\subsection{Linear equilibrium under truthful communication}

We next fix a network \(g\) and ask what happens in the action stage if all
messages are truthful. In that case, agent \(i\)'s information consists exactly of
her own signal and the signals reported by her neighbors, that is, the signals in
her closed neighborhood \(\bar N_i(g)\).

Because payoffs are quadratic and signals are Gaussian, the induced action game has
a linear structure. The following lemma records the resulting equilibrium action
profile, which will be used to study incentives for truthful reporting.

\begin{lemma}[Linear equilibrium under truthful communication]
\label{lem:linear_truthful}
Fix a network \(g\) and a precision profile \(\tau\). Under truthful communication,
the action stage admits a unique Bayesian equilibrium. This
equilibrium is linear: each agent \(i\)'s action is of the form
\[
a_i=\sum_{\ell\in \bar N_i(g)} b_{i\ell}x_\ell,
\]
where the coefficients \((b_{i\ell})\) are uniquely identified.
\end{lemma}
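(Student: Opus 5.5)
Under truthful communication each agent $i$ observes $\mathcal I_i=\{x_\ell:\ell\in\bar N_i(g)\}$. I treat the action stage as a Bayesian game with quadratic payoffs and Gaussian information, and argue via the best-response map on square-integrable strategies. The costs $c(d_i(g))$ do not depend on actions, and isolated agents simply choose $E[\theta\mid x_i]$. So fix an agent with $d_i>0$. Maximizing $E[U_i\mid\mathcal I_i]$ is a strictly concave problem in $a_i$. The first-order condition gives the best response
\[
a_i=\frac{1}{1+\gamma}\,E[\theta\mid \mathcal I_i]+\frac{\gamma}{1+\gamma}\,\frac{1}{d_i}\sum_{j\in N_i(g)}E[a_j\mid \mathcal I_i].
\]
Here I use that the weights $\gamma/d_i$ sum to $\gamma$. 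Any equilibrium in strategies with finite second moments must satisfy this system.

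\textbf{Step 1 (contraction, hence uniqueness).} Let $H_i=L^2(\sigma(\mathcal I_i))$ and $H=\prod_i H_i$, with norm $\|a\|=\max_i\|a_i\|_{L^2}$. Define the best-response operator $T$ by the right-hand side above. Conditional expectation is an $L^2$ contraction and the neighbor term is an average. Hence
\[
\|T a-T a'\|\le \frac{\gamma}{1+\gamma}\,\|a-a'\|,
\]
so $T$ is a contraction with modulus $\gamma/(1+\gamma)<1$. Banach's theorem gives a unique fixed point in $H$, which is the unique Bayesian equilibrium among square-integrable strategies. I also need to rule out non-$L^2$ equilibria. In any equilibrium the expected payoff is finite, since deviating to $E[\theta\mid\mathcal I_i]$ already yields a finite loss. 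This forces $a_i\in L^2$. I expect this measurability and integrability bookkeeping to be the main obstacle, and I would handle it by this finite-loss argument.

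\textbf{Step 2 (linearity).} Let $L\subset H$ be the closed subspace of profiles in which $a_i$ is a linear combination of $(x_\ell)_{\ell\in\bar N_i}$. This space is finite-dimensional, hence closed. All signals are jointly Gaussian with $\theta$. Therefore $E[\theta\mid\mathcal I_i]$ is linear in $\mathcal I_i$, and for $a_j$ linear in $(x_m)_{m\in\bar N_j}$, $E[a_j\mid\mathcal I_i]$ is linear in $\mathcal I_i$. So $T(L)\subseteq L$. The restriction of $T$ to $L$ is still a contraction, so its fixed point lies in $L$. By uniqueness from Step 1 it is the equilibrium. Hence $a_i=\sum_{\ell\in\bar N_i}b_{i\ell}x_\ell$.

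\textbf{Step 3 (coefficients identified).} In $L$ the fixed-point equation becomes a finite linear system $b=\beta+Mb$. Here $\beta$ collects the coefficients of $E[\theta\mid\mathcal I_i]$, namely $\tau_\ell/(\sigma_\theta+\sum_{k\in\bar N_i}\tau_k)$. The entries of $M$ come from the Gaussian projection coefficients of $x_m$ onto $\mathcal I_i$: $x_m$ itself if $m\in\bar N_i$, and otherwise its conditional-mean coefficients. These are scaled by $\gamma/((1+\gamma)d_i)$. Since the covariance matrix of $(x_\ell)_{\ell\in\bar N_i}$ is positive definite (all $\tau_\ell>0$), $a_i$ determines $(b_{i\ell})$ uniquely. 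The contraction property gives $\|M\|<1$ in the induced norm, so $I-M$ is invertible and $b=(I-M)^{-1}\beta$ is uniquely identified.
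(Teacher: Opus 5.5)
Your argument is essentially the paper's proof: the same first-order conditions, the same best-response operator on $\prod_i L^2(\sigma(\mathcal I_i))$ with the max norm, the same contraction modulus $\gamma/(1+\gamma)$ with Banach, and the same invariant finite-dimensional linear subspace; your Step~3 usefully makes explicit, via positive definiteness of the signal covariance, why the coefficients are identified, which the paper only asserts. One caveat is that your side argument excluding non-$L^2$ equilibria is circular: deviating to $\mathbb E[\theta\mid\mathcal I_i]$ gives finite loss only if the neighbors' actions are already square-integrable (in a clique where everyone plays the same non-square-integrable function of $x_C$, every strategy of each agent has ex-ante payoff $-\infty$), so you should simply restrict strategies to square-integrable ones, as the paper does.
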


\subsection{Truthful communication in clique partitions}

The previous lemma describes behavior once truthful communication is imposed. We now
ask when truthful communication is itself incentive compatible. A simple sufficient
condition is that agents who communicate with one another have the same local
information. Clique components satisfy exactly this property.

\begin{definition}[Clique partition]
A network \(g\) is a clique partition if every connected component of \(g\) is a clique.
Equivalently, for every connected component \(C\subseteq N\), we have \(ij\in g\) for
all distinct \(i,j\in C\), and there are no links across components.
\end{definition}

In a clique component, every agent observes the same profile of signals under truthful
reporting. Thus agents have the same posterior belief and choose the same action. This
removes the informational asymmetry that would otherwise make a sender want to distort
her report.

\begin{proposition}[Truthful communication in clique partitions]
\label{prop:clique_truthful}
Fix a network \(g\) and a precision profile \(\tau\). If \(g\) is a clique partition,
then the continuation game induced by \((g,\tau)\) admits a truthful continuation
equilibrium.
\end{proposition}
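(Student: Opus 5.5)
We construct the equilibrium explicitly and verify sequential optimality stage by stage, working one clique component $C$ at a time, since there are no links across components and payoffs of members of $C$ depend only on actions within $C$.

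\textbf{The candidate profile.} Every agent reports truthfully, $m_{ij}(x_i)=x_i$ for all $ij\in g$. In the action stage, each $i\in C$ plays the common posterior mean
\[
a_i=\bar a_C\equiv E[\theta\mid x_C]=\frac{\sum_{\ell\in C}\tau_\ell x_\ell}{\sigma_\theta+\sum_{\ell\in C}\tau_\ell}.
\]
Isolated agents play $\tau_i x_i/(\sigma_\theta+\tau_i)$. Off-path beliefs are unconstrained because every message in $\mathbb R$ is on path under truthful reporting, given Gaussian support. Beliefs after any received message profile are therefore pinned down by Bayes' rule, treating reports as the true signals.

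\textbf{Step 1: the action stage.} Under truthful reporting, all members of $C$ observe the same vector $x_C$, since $\bar N_i(g)=C$. By Lemma~\ref{lem:linear_truthful} the equilibrium is unique and linear. To identify it, note that the first-order condition of $i$ reads
\[
a_i=\frac{E_i[\theta]+\gamma\,\frac{1}{d_i}\sum_{j}E_i[a_j]}{1+\gamma}.
\]
The symmetric profile $a_j=E[\theta\mid x_C]$ solves this for every $i$, so by uniqueness it is the equilibrium. When $d_i=0$ this reduces to the posterior mean.

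\textbf{Step 2: no profitable deviation in the message stage.} This is the main step. Fix a sender $i$ and signal $x_i$, and suppose $i$ deviates to arbitrary messages $(\hat m_{ij})_{j\in N_i}$, possibly different across receivers. Receivers follow equilibrium strategies. Receiver $j$ then plays $\bar a_C+w_i(\hat m_{ij}-x_i)$ with $w_i=\tau_i/(\sigma_\theta+\sum_C\tau_\ell)$. All other realized signals enter every agent's action identically. Since $i$ knows her own deviation, her best reply in the action stage is
\[
a_i=\frac{E_i[\theta]+\frac{\gamma}{d_i}\sum_j E_i[a_j]}{1+\gamma},
\]
where the expectation $E_i$ is over $x_{C\setminus i}$ and $\theta$ given $x_i$. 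Write $a_j=\bar a_C+\delta_j$ with $\delta_j$ known to $i$, and let $\bar\delta=\frac1{d_i}\sum_j\delta_j$. Conditional on $x_i$, the unknown parts are common.

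We then compute $i$'s expected payoff, using the fact that $i$ chooses her action after messages. The key identity is that $\bar a_C$ is the conditional expectation of $\theta$ given $x_C$. This lets us decompose
\[
E(a_i-\theta)^2=E(a_i-\bar a_C)^2+E(\bar a_C-\theta)^2 .
\]
However, $i$ does not observe $x_C$ when choosing $a_i$. We handle this by noting that she could instead choose her action conditional on the messages only. The cleanest route is to compare with the full-information benchmark. Under truth, $i$'s payoff equals $-E(\bar a_C-\theta)^2$. This is the maximum of
\[
-E(a_i-\theta)^2-\frac{\gamma}{d_i}\sum_j E(a_i-a_j)^2
\]
over all profiles in which the $a_j$ are measurable in $x_C$, because both terms are then at their minima: the first is minimized by the posterior mean and the second equals zero.

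Under any deviation, $i$'s action is measurable with respect to $x_C$, since she also receives the others' signals truthfully. The first term is therefore at least $E(\bar a_C-\theta)^2$, and the coordination term is nonnegative. Hence the deviation weakly lowers her expected payoff, conditional on $x_i$ as well as ex ante.

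The obstacle here is to make sure that $i$'s action under deviation can depend on $x_{C\setminus i}$. It can, because $i$ receives messages from all of $C$; this is exactly where the clique structure is used. The conditional-on-$x_i$ version of the inequality follows by applying the same argument pointwise. The sequential optimality of $m_{ij}$ for every link then gives a truthful Perfect Bayesian continuation equilibrium.
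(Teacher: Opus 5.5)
Your proof is correct, but it reaches the conclusion by a different route from the paper.

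The paper solves the deviator's continuation problem explicitly. Each receiver $j$ plays $\mu_C(x_C)+\delta_{ij}$ with $\delta_{ij}=\tau_i(m_{ij}-x_i)/(\sigma_\theta+S_C)$. After netting out $\operatorname{Var}(\theta\mid x_C)$, agent $i$ therefore minimizes $t^2+\frac{\gamma}{|C|-1}\sum_{j}(t-\delta_{ij})^2$. Its minimum is zero if and only if every $\delta_{ij}=0$, and since $\tau_i>0$, truthful reporting is the \emph{unique} optimal message vector.

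You argue instead that truthful play attains an upper bound. Whatever $i$ sends, her action is a function of $x_C$. By the projection property of $\bar a_C=\mu_C(x_C)=\mathbb E[\theta\mid x_C]$ and the tower property, her accuracy loss conditional on $x_i$ is at least $\mathbb E[(\bar a_C-\theta)^2\mid x_i]$, and her coordination loss is at least zero. Truthful play attains both bounds at once, because every clique member observes exactly $x_C$ and plays $\bar a_C$.

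Your route is shorter and more robust. It never uses the receivers' reaction $\delta_{ij}$, linearity, or the coordination term beyond its nonnegativity. It is also the same ``ideal case'' bound the paper later uses in the core argument. The cost is that it yields only weak optimality of truth-telling. That suffices for existence, which is all the proposition claims, whereas the paper's computation also gives strictness.

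A few sentences in your Step 2 are wrong and should be removed, although your final argument does not rely on them.
\begin{itemize}
\item At the action stage $i$ \emph{does} observe $x_C$: her own signal plus truthful reports from all of $C\setminus\{i\}$. So the claim that she does not, and the first-order condition with $E_i$ taken over $x_{C\setminus i}$ and $\theta$ given $x_i$, describe the wrong information set.
\item The measurability restriction your bound needs is on $a_i$, not on the $a_j$. It holds trivially, since every input to $i$'s action is a function of $x_C$. The ``obstacle'' you name is therefore not where the clique structure enters. It enters in making the bound attainable, by forcing every neighbor to play the same $\bar a_C$ that $i$ plays.
\item A fully specified PBE must let $i$'s action rule respond to her own sent messages. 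After a deviation it should prescribe $\bar a_C+\frac{\gamma}{1+\gamma}\bar\delta$ in your notation, not $\bar a_C$. The paper is equally terse here, and this does not affect the message-stage comparison, since your bound holds for any continuation action.
\end{itemize}
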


\subsection{Failure of truthful communication in a line}

The previous result shows that clique partitions support truthful communication.
The next example shows that truthful communication need not be sustainable in
non-clique networks.

Consider three agents \(N=\{1,2,3\}\) and the line network
\[
g=\{12,23\}.
\]
Let
\[
\sigma_\theta=\tau_1=\tau_2=\tau_3=\gamma=1.
\]
Then
\[
\bar N_1(g)=\{1,2\},\qquad
\bar N_2(g)=\{1,2,3\},\qquad
\bar N_3(g)=\{2,3\}.
\]

Suppose, toward a contradiction, that truthful communication is an equilibrium.
Under truthful communication, the unique linear action equilibrium is
\[
a_1=\frac{3}{10}x_1+\frac{7}{20}x_2,
\]
\[
a_2=\frac{1}{5}x_1+\frac{3}{10}x_2+\frac{1}{5}x_3,
\]
and
\[
a_3=\frac{7}{20}x_2+\frac{3}{10}x_3.
\]
These coefficients are obtained from the first-order conditions in
Lemma~\ref{lem:linear_truthful}. For example,
\[
\mathbb E[\theta\mid x_1,x_2]=\frac{x_1+x_2}{3},
\qquad
\mathbb E[x_3\mid x_1,x_2]=\frac{x_1+x_2}{3},
\]
so agent \(1\)'s best response is
\(
a_1
=
\frac12\mathbb E[\theta\mid x_1,x_2]
+
\frac12\mathbb E[a_2\mid x_1,x_2]
=
\frac{3}{10}x_1+\frac{7}{20}x_2
\).
The other equations are verified analogously.

Now consider agent \(1\)'s report to agent \(2\). Since strategies have perfect
recall, a deviation by agent \(1\) may involve both the message sent to agent
\(2\) and the subsequent action chosen by agent \(1\). Let agent \(1\) send an
arbitrary message \(m_{12}\). Given agent \(2\)'s equilibrium action strategy,
agent \(2\)'s action is
\[
a_2(m_{12})
=
\frac15 m_{12}
+
\frac{3}{10}x_2
+
\frac15 x_3.
\]
After sending \(m_{12}\) and receiving \(x_2\) from agent \(2\), agent \(1\)'s
best response solves
\[
\min_a
\mathbb E\left[
(a-\theta)^2+(a-a_2(m_{12}))^2
\mid x_1,x_2,m_{12}
\right].
\]
Hence
\[
a_1^{BR}(m_{12})
=
\frac12\mathbb E[\theta\mid x_1,x_2]
+
\frac12\mathbb E[a_2(m_{12})\mid x_1,x_2].
\]
Using
\[
\mathbb E[\theta\mid x_1,x_2]=\frac{x_1+x_2}{3},
\qquad
\mathbb E[x_3\mid x_1,x_2]=\frac{x_1+x_2}{3},
\]
we obtain
\[
a_1^{BR}(m_{12})
=
\frac15 x_1+\frac{7}{20}x_2+\frac{1}{10}m_{12}.
\]
Notice that, when \(m_{12}=x_1\), this coincides with the truthful-equilibrium
action \(a_1=(3/10)x_1+(7/20)x_2\).

At the communication stage, agent \(1\) chooses the message rule as a function of
\(x_1\), anticipating the optimal continuation action chosen later after observing
the message received from agent \(2\). Substituting the optimal continuation action
\(a_1^{BR}(m_{12})\), agent \(1\)'s conditional expected loss is
\[
\mathbb E\left[
\left(a_1^{BR}(m_{12})-\theta\right)^2
+
\left(a_1^{BR}(m_{12})-a_2(m_{12})\right)^2
\mid x_1
\right].
\]
A direct calculation gives
\[
\mathbb E\left[
\left(a_1^{BR}(m_{12})-\theta\right)^2
+
\left(a_1^{BR}(m_{12})-a_2(m_{12})\right)^2
\mid x_1
\right]
=
\frac{16m_{12}^2-40m_{12}x_1+25x_1^2+310}{800}.
\]
The first-order condition therefore gives
\[
m_{12}^*(x_1)=\frac54 x_1.
\]
Thus truthful reporting,
\(
m_{12}(x_1)=x_1,
\)
is not optimal whenever \(x_1\neq 0\). This profitable deviation already allows
agent \(1\) to choose her optimal continuation action after the deviating
message. Therefore truthful communication is not an equilibrium in the line
network.

\section{Assortative Information Clubs}
\label{sec:formation}

We now study which truthful communication structures survive coalitional deviations.
By Proposition~\ref{prop:clique_truthful}, clique partitions support truthful
communication. We therefore focus on truthful clique outcomes and show that the
core selects assortative information clubs.

Throughout this section, order agents so that
\[
\tau_1\geq \tau_2\geq \cdots \geq \tau_n.
\]
No strict ranking is required. If several agents have the same precision, they may
be permuted without changing any payoff comparison.

\begin{definition}[Assortative clique partition]
A clique partition \(\mathcal C=\{C_1,\ldots,C_K\}\) is assortative if its elements
can be ordered so that, whenever \(i\in C_k\), \(j\in C_\ell\), and \(k<\ell\), then
\[
\tau_i\geq \tau_j.
\]
Equivalently, up to permutations of agents with identical precision, each clique is
a consecutive block in the precision ordering.
\end{definition}

For a clique \(C\), let
\[
S_C=\sum_{\ell\in C}\tau_\ell
\]
be the total precision available to its members. Under truthful communication in
\(C\), all members observe the same profile of signals \(x_C\), choose the same
posterior action, and have zero coordination loss. Hence the ex--ante payoff of each
member of \(C\) is
\[
-\frac{1}{\sigma_\theta+S_C}-c(|C|-1).
\]
Thus, among cliques of a fixed size, every member prefers the clique with the
largest total precision.

We construct an assortative clique partition recursively. Let \(r\) be the highest-ranked agent not yet assigned to a clique. Among the
remaining agents, choose an endpoint \(s\ge r\), and form the block
\[
C=\{r,r+1,\ldots,s\},
\]
so as to maximize
\[
-\frac{1}{\sigma_\theta+\sum_{\ell=r}^s\tau_\ell}-c(s-r).
\]
Close this clique, remove its members, and repeat the procedure on the remaining
agents until all agents are assigned. If there are ties, any maximizer may be chosen.
The procedure may also select singleton cliques, in particular for the lowest-precision residual agents when forming additional links may not be payoff improving.

The core argument uses the following individual upper bound. If an agent has
degree \(d\) in any deviating network, then she can receive direct messages from
at most \(d\) other agents, and hence can base her action on at most \(d+1\)
private signals, including her own. If some of these messages are not truthful,
or only partially informative, the informational term cannot improve relative to
the case in which the same signals are truthfully observed. The coordination
term is also always weakly negative and is maximized at zero. Thus any feasible
deviation is bounded above by the ideal case in which the agent observes the
best possible \(d+1\) signals and incurs zero coordination loss.

\begin{proposition}[Core characterization]
\label{prop:core_assortative}
The truthful assortative clique profiles generated by the recursive procedure are
the only outcomes in the core, up to permutations of agents with identical precision.
\end{proposition}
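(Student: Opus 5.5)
The proof has two parts: the recursive profiles are in the core (stability), and every core outcome is one of them (uniqueness). Both rest on the individual upper bound stated just before the proposition. An agent of degree \(d\) in any network, under any continuation strategies, earns ex--ante at most
\[
V(d)=-\frac{1}{\sigma_\theta+T(d)}-c(d),
\]
where \(T(d)\) is the largest total precision of \(d+1\) signals she could observe. If the signals are restricted to a set \(A\) of available agents containing her, \(T(d)\) is the sum of her own precision and the \(d\) largest other precisions in \(A\).

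The justification is a Blackwell/garbling argument. Any message profile is a garbling of the senders' signals, so her posterior variance of \(\theta\) is at least that from observing those signals truthfully. The conditional-variance term is the only unavoidable accuracy loss, and the coordination term is nonpositive. By Proposition~\ref{prop:clique_truthful} and the computation \(-1/(\sigma_\theta+S_C)-c(|C|-1)\), a member of a truthful clique \(C\) attains exactly this bound for the signal set \(C\).

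\textbf{Stability.} I would argue by induction along the recursion. Let \(C_1=\{1,\dots,s_1\}\) be the first block. Take a deviating coalition \(B\) and let \(i^*\) be the highest-ranked member of \(B\cap C_1\), if this set is nonempty.

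1. In the deviation \(i^*\) has some degree \(d\). Her payoff is at most \(-1/(\sigma_\theta+\tau_{i^*}+\text{top } d \text{ others})-c(d)\).
2. Since \(1,\dots,i^*\) are the top agents, this is at most the payoff from the block \(\{1,\dots,d+1\}\), up to the ordering of who is \(i^*\). The recursion chose \(s_1\) to maximize exactly this expression over blocks starting at \(1\). Hence \(i^*\) cannot be strictly better off.
3. If she is exactly indifferent, then she must observe truthfully a maximizing block and have zero coordination loss. I would use this to show the deviation leaves the other \(C_1\)-members in \(B\) no better off as well.
4. So no member of \(C_1\) gains strictly. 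Removing \(C_1\) leaves the same problem on the remaining agents, because a deviation involving lower agents cannot borrow \(C_1\) members without making them weakly worse off.
5. Iterating over blocks shows no agent in \(B\) gains strictly, so the recursive profiles are in the core.

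\textbf{Uniqueness.} Take a core outcome. The coalition \(\{1,\dots,s_1\}\) can deviate to a truthful clique and give each member the maximal value \(V^*_1\). Hence in the core every agent in \(\{1,\dots,s_1\}\) must already earn at least \(V^*_1\). By the upper bound and the maximization defining \(s_1\), this forces each of them to observe truthfully the signals of a maximizing block with zero coordination loss and degree equal to the block size minus one. Zero coordination loss with identical information forces the component to be a clique equal to a maximizing block, modulo ties in precision. I would then remove this block and iterate.

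The main obstacle is the tie and indifference bookkeeping. Ties arise both in the recursion's argmax and among equal precisions, and the core definition uses weak improvement, so a deviation where everyone is indifferent and one agent gains must be excluded carefully. The delicate point is the case where \(i^*\) is not agent \(1\): the block \(\{1,\dots,d+1\}\) bound must still dominate. I would first try to handle this by showing that any block containing \(i^*\) of size \(d+1\) has total precision at most that of the top \(d+1\) agents. This holds because \(i^*\) itself is among the top \(s_1\) agents.
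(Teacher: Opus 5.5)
\paragraph{Where the proposal fails.} Your individual bound and your uniqueness half are sound and follow the paper's route: the top block deviates to a truthful clique, and you iterate. The gap is in the stability half, at steps 3--4.

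Showing that a deviation ``cannot borrow $C_1$ members without making them weakly worse off'' does not exclude blocking. The core here uses weak blocking, so a borrowed member who is exactly indifferent is a legitimate member of a blocking coalition. A lower-ranked agent can then strictly gain by joining her. This is not tie bookkeeping that can be cleaned up; it actually happens. Take $\tau_1=\tau_2=\tau_3=h$ and $c(d)=\kappa d$ with
\[
\frac{h}{(\sigma_\theta+2h)(\sigma_\theta+3h)}<\kappa<\frac{h}{(\sigma_\theta+h)(\sigma_\theta+2h)}.
\]
Then the recursion uniquely selects a pair and yields $\{\{1,2\},\{3\}\}$. The coalition $\{2,3\}$ drops $12$ and forms a truthful pair. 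Agent $2$ keeps $-1/(\sigma_\theta+2h)-\kappa$, while agent $3$ rises from $-1/(\sigma_\theta+h)$ to $-1/(\sigma_\theta+2h)-\kappa$.

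Every permutation is blocked the same way, and in fact the core is empty. The maximal individual payoff $-1/(\sigma_\theta+2h)-\kappa$ requires degree one, and three agents cannot all have degree one. Whoever falls short therefore blocks together with any partner.

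A second failure occurs even with strict precisions. If the recursion picks a maximizer $s$ while some $s'>s$ also maximizes, then $\{r,\ldots,s'\}$ blocks: $r,\ldots,s$ are indifferent and $s+1,\ldots,s'$ strictly gain. So the proposition as stated (ties allowed, any maximizer) is false, and your step 4 cannot be completed. The paper's proof has the same hole where it asserts that, once $i$ is indifferent, no lower-ranked member of $B$ can be made strictly better off.

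\paragraph{What closes the argument.} The argument goes through under strict precisions, with ties in the recursion broken toward the largest maximizer.

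Let $i$ be the highest-ranked member of $B$, with $i\in C_k=\{r,\ldots,s\}$. Your bound, computed over members of $B\subseteq\{i,\ldots,n\}$, is strictly below $u(C_k)$ unless $i=r$. Indifference then forces $r$ to observe exactly $x_r,\ldots,x_{r+d}$ truthfully for a maximizing $d$, with zero coordination loss. The largest-maximizer rule gives $\{r,\ldots,r+d\}\subseteq C_k$.

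Every neighbor of $r$ must play $\mathbb E[\theta\mid x_r,\ldots,x_{r+d}]$. Since each signal carries independent noise, each neighbor must be linked to the whole set. Each such neighbor must also have zero coordination loss to avoid being worse off, because its accuracy term is pinned and its cost is at least $c(d)$. So any outside neighbor would have to play the same action and hence be linked to $r$, which is impossible.

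Thus the set is an isolated clique of original $C_k$ members, all exactly indifferent. Remove it from $B$; the remainder still blocks. Repeat with the next-highest member. This is the induction your step 4 needs.
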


The logic of the result is recursive. For any given number of links, the
highest-precision remaining agents generate the largest attainable informational
benefit, while truthful clique communication eliminates coordination losses.
Hence each clique selected by the recursive procedure gives its members the
largest feasible precision-cost payoff among the remaining agents.

\subsection{Efficiency}

Core-stable truthful clubs need not be socially efficient.\footnote{
The stability notion used here is non-transferable: blocking requires a Pareto
improvement for all members of the deviating coalition, without transfers across
agents. This differs from the utilitarian efficiency notion discussed below, which
aggregates utilities across agents. See \cite{kaneko_wooders_1998}
for the distinction between cooperative games with and without side payments.
} 
The reason is that private stability is based on the payoff of the members of
each club, while social welfare aggregates the informational gains generated by
all signals across all agents.

For a truthful clique partition \(\mathcal C\), aggregate expected payoff is
\[
W(\mathcal C)
=
\sum_{C\in\mathcal C}
|C|
\left(
-\frac{1}{\sigma_\theta+S_C}
-c(|C|-1)
\right).
\]
The function \( 1/(\sigma_\theta+S)
\)
is strictly convex and decreasing in $S$, so the marginal value of precision is higher
in low-precision cliques than in already high-precision cliques. Therefore,
concentrating the most precise agents in the same clique can be privately stable
but socially dominated by a more mixed partition that spreads high-precision
signals across groups.

To see this explicitly, consider four agents: two high-precision agents \(h,h\)
and two low-precision agents \(\ell,\ell\), with \(h>\ell>0\). Let linking costs
be linear,
\[
c(d)=\kappa d.
\]
The recursive procedure first forms the high-precision pair \(\{h,h\}\) rather
than a singleton or a larger clique whenever
\[
\frac{\ell}{(\sigma_\theta+2h)(\sigma_\theta+2h+\ell)}
<
\kappa
<
\frac{h}{(\sigma_\theta+h)(\sigma_\theta+2h)}.
\]
After removing the two high-precision agents, the two low-precision agents form
the pair \(\{\ell,\ell\}\) rather than remaining singletons whenever
\[
\kappa
<
\frac{\ell}{(\sigma_\theta+\ell)(\sigma_\theta+2\ell)}.
\]
Thus the assortative partition
\[
\mathcal C^A=\big\{\{h,h\},\{\ell,\ell\}\big\}
\]
is generated by the recursive procedure whenever
\[
\frac{\ell}{(\sigma_\theta+2h)(\sigma_\theta+2h+\ell)}
<
\kappa
<
\min\left\{
\frac{h}{(\sigma_\theta+h)(\sigma_\theta+2h)},
\frac{\ell}{(\sigma_\theta+\ell)(\sigma_\theta+2\ell)}
\right\}.
\]
For such values of \(\kappa\), the truthful core-stable partition is assortative.

Now compare it with the mixed partition
\[
\mathcal C^M=\big\{\{h,\ell\},\{h,\ell\}\big\}.
\]
The aggregate expected payoff under the assortative partition is
\[
W(\mathcal C^A)
=
-\frac{2}{\sigma_\theta+2h}
-\frac{2}{\sigma_\theta+2\ell}
-4\kappa,
\]
whereas under the mixed partition it is
\[
W(\mathcal C^M)
=
-\frac{4}{\sigma_\theta+h+\ell}
-4\kappa.
\]
The link costs are identical and cancel out in the welfare comparison. Since
\(1/(\sigma_\theta+S)\) is strictly convex in $S$ and \(h>\ell\),
\[
\frac12
\left(
\frac{1}{\sigma_\theta+2h}
+
\frac{1}{\sigma_\theta+2\ell}
\right)
>
\frac{1}{\sigma_\theta+h+\ell}.
\]
Therefore
\[
W(\mathcal C^M)>W(\mathcal C^A).
\]
The assortative truthful partition is core-stable, but it is socially dominated
by the mixed partition.

Thus the same assortative force that supports truthful communication and core
stability can generate inefficient informational segregation.

\section{Conclusion}
\label{sec:conclusion}

This paper studies how the incentive constraints of communication shape endogenous
network formation. Links do not simply transmit information. They also determine
whether information can be transmitted truthfully. A network that gives agents access
to valuable signals may nevertheless fail to support truthful reporting if it creates local
informational asymmetries.

Clique components provide a simple solution. Within a clique, all members observe the
same local profile of signals and choose the same posterior action. Coordination losses
inside the clique are therefore eliminated, and no agent can gain by distorting a report
to another member. Cliques are thus not only dense communication groups; they are
structures that make truthful communication self-enforcing.

With heterogeneous signal precisions and convex linking costs, this incentive constraint
generates assortative information clubs. High-precision agents are valuable partners, but
additional signals have decreasing marginal value. As a result, agents sort into cliques
ordered by precision, and each clique expands only up to the point at which the marginal
informational gain no longer justifies the additional linking cost.

The stable truthful architecture need not be socially efficient. Since each agent does not
internalize the informational value of her signal for other members, equilibrium clubs may
be too small. Moreover, concentrating high-precision agents in the same clique can be
privately stable while a more mixed partition would improve aggregate welfare by spreading
precise signals across groups.

The analysis deliberately keeps the stability notion demanding. Coalitional deviations may
change both the communication network and the continuation strategies available to the
deviating coalition. This is what makes the core characterization sharp: stable networks
must be robust not only to alternative links, but also to alternative truthful communication
regimes. Weaker notions of stability, such as deviations by single agents or pairs, would
enlarge the set of stable architectures and could sustain additional, possibly more efficient,
clique partitions. These alternatives are interesting, but they answer a different question.
The present paper focuses on the benchmark in which truthful communication, network
formation, and coalitional stability are disciplined by the same equilibrium object.

The broader message is that exclusivity in communication networks need not reflect only
access, trust, or status. It may also be an equilibrium response to strategic communication
constraints. Exclusive information clubs can arise because closed groups make truthful
information transmission possible, and because robustness to alternative truthful
communication structures selects assortative clubs.

\bibliographystyle{ecta}
\bibliography{biblio}

\newpage

\appendix
\setcounter{equation}{0}

\begin{center}
	{\LARGE \textbf{Appendix}}
\end{center}

\section{Proofs}\label{app:proofs}

\subsection*{Proof of Proposition~\ref{prop:babbling}}

\begin{proof}
Consider the following strategy profile. For every link \(ij\in g\), agent \(i\)
sends a constant message,
\[
m_{ij}(x_i)=\bar m
\quad \text{for all } x_i,
\]
for some fixed \(\bar m\in\mathbb R\).

Beliefs are such that, after observing any profile of messages and her own signal
\(x_i\), agent \(i\) forms posterior beliefs about \(\theta\) using only \(x_i\).
That is, messages are treated as uninformative about the state, both on and off
the equilibrium path.

Given these beliefs, action strategies are chosen as the Bayesian equilibrium of the
action stage in which each agent conditions only on her own signal.
Thus actions do not depend on messages. Since a unilateral deviation in the
communication stage does not affect any agent's action, it does not affect the
deviator's payoff. Hence no deviation in messages is profitable. Optimality of
actions follows from Bayesian updating based on \(x_i\) alone.

Therefore, the described strategies and beliefs form a Perfect Bayesian equilibrium.
\end{proof}

\subsection*{Proof of Lemma~\ref{lem:linear_truthful}}

\begin{proof}
Under truthful communication, agent \(i\) observes
\[
I_i(g)=\{x_\ell:\ell\in\bar N_i(g)\}.
\]
For \(d_i(g)>0\), agent \(i\)'s first-order condition is
\[
a_i
=
\frac{1}{1+\gamma}\mathbb E[\theta\mid I_i(g)]
+
\frac{\gamma}{1+\gamma}\frac{1}{d_i(g)}
\sum_{j\in N_i(g)}
\mathbb E[a_j\mid I_i(g)].
\]
If \(d_i(g)=0\), the first-order condition is
\[
a_i=\mathbb E[\theta\mid x_i].
\]

Since the prior mean is zero and signals are Gaussian,
\[
\mathbb E[\theta\mid I_i(g)]
=
\frac{\sum_{\ell\in\bar N_i(g)}\tau_\ell x_\ell}
{\sigma_\theta+\sum_{\ell\in\bar N_i(g)}\tau_\ell}.
\]

Let \(\mathcal H_i\) be the space of measurable action functions with finite second moments,
measurable with respect to \(I_i(g)\), and let
\[
\mathcal H=\prod_{i\in N}\mathcal H_i
\]
with norm
\[
\|a\|=\max_{i\in N}\|a_i\|_2.
\]
Define the best-response operator \(T:\mathcal H\to\mathcal H\) by
\[
(Ta)_i
=
\frac{1}{1+\gamma}\mathbb E[\theta\mid I_i(g)]
+
\frac{\gamma}{1+\gamma}\frac{1}{d_i(g)}
\sum_{j\in N_i(g)}
\mathbb E[a_j\mid I_i(g)]
\]
when \(d_i(g)>0\), and by
\[
(Ta)_i=\mathbb E[\theta\mid x_i]
\]
when \(d_i(g)=0\).

For \(d_i(g)>0\),
\[
(Ta)_i-(Ta')_i
=
\frac{\gamma}{1+\gamma}\frac{1}{d_i(g)}
\sum_{j\in N_i(g)}
\mathbb E[a_j-a'_j\mid I_i(g)].
\]
By Jensen's inequality,
\[
\|(Ta)_i-(Ta')_i\|_2
\leq
\frac{\gamma}{1+\gamma}\frac{1}{d_i(g)}
\sum_{j\in N_i(g)}
\|a_j-a'_j\|_2 .
\]
For isolated agents, the left-hand side is zero. Hence
\[
\|Ta-Ta'\|
\leq
\frac{\gamma}{1+\gamma}\|a-a'\|.
\]
Since \(\gamma/(1+\gamma)<1\), \(T\) is a contraction. By the Banach fixed-point
theorem, \(T\) has a unique fixed point in \(\mathcal H\). This fixed point is the
unique Bayesian equilibrium of the action stage.

It remains to show that this equilibrium is linear. Let
\(\mathcal L_i\subseteq\mathcal H_i\) be the finite-dimensional subspace of linear
functions of \((x_\ell)_{\ell\in\bar N_i(g)}\), and let
\[
\mathcal L=\prod_{i\in N}\mathcal L_i.
\]
The operator \(T\) maps \(\mathcal L\) into itself because posterior expectations of
jointly Gaussian linear random variables are linear. Since the fixed point of \(T\)
in \(\mathcal H\) is unique, and since \(T\) has a fixed point in the invariant
subspace \(\mathcal L\), the unique equilibrium belongs to
\(\mathcal L\). Therefore each action is linear and can be written as
\[
a_i=\sum_{\ell\in\bar N_i(g)}b_{i\ell}x_\ell .
\]
Uniqueness of the fixed point implies that the coefficients \((b_{i\ell})\) are
uniquely identified.
\end{proof}

\subsection*{Proof of Proposition~\ref{prop:clique_truthful}}

\begin{proof}
Suppose \(g\) is a clique partition and let \(C\) be a connected component of
\(g\). If \(|C|=1\), there is no communication link to check. Hence consider a
component \(C\) with \(|C|\geq 2\).

Since \(C\) is a clique, for every \(i\in C\),
\[
\bar N_i(g)=C.
\]
Thus, under truthful communication, every agent in \(C\) observes the same
profile of signals
\[
x_C=(x_\ell)_{\ell\in C}.
\]
Define
\[
\mu_C(x_C)
=
\mathbb E[\theta\mid x_C]
=
\frac{\sum_{\ell\in C}\tau_\ell x_\ell}{\sigma_\theta+\sum_{\ell\in C}\tau_\ell}.
\]
If all messages are truthful, all agents in \(C\) choose \(\mu_C(x_C)\). This is
optimal because the coordination term is zero and \(\mu_C(x_C)\) minimizes
\[
\mathbb E[(a-\theta)^2\mid x_C].
\]

We now verify incentive compatibility, allowing for full strategies with perfect
recall. Fix an agent \(i\in C\). Suppose all agents other than \(i\) report
truthfully, and allow agent \(i\) to deviate both in the entire vector of
messages sent to her neighbors and in her subsequent action.

Let
\[
m_i=(m_{ij})_{j\in C\setminus\{i\}}
\]
be an arbitrary vector of messages sent by \(i\). Given the action strategies of
the other agents, neighbor \(j\in C\setminus\{i\}\) chooses
\[
\mu_C(m_{ij},x_{C\setminus\{i\}})
=
\frac{\tau_i m_{ij}+\sum_{\ell\in C\setminus\{i\}}\tau_\ell x_\ell}{\sigma_\theta+\sum_{\ell\in C}\tau_\ell}.
\]
Equivalently,
\[
\mu_C(m_{ij},x_{C\setminus\{i\}})
=
\mu_C(x_C)
+
\frac{\tau_i}{\sigma_\theta+\sum_{\ell\in C}\tau_\ell}(m_{ij}-x_i).
\]
Define
\[
\delta_{ij}
=
\frac{\tau_i}{\sigma_\theta+\sum_{\ell\in C}\tau_\ell}(m_{ij}-x_i).
\]
Thus neighbor \(j\)'s action is \(\mu_C(x_C)+\delta_{ij}\).

After sending \(m_i\) and receiving truthful messages from all neighbors, agent
\(i\) observes \(x_C\) and remembers \(m_i\). Conditional on this information,
her expected loss, up to the posterior variance term
\(\operatorname{Var}(\theta\mid x_C)\), is
\[
\left(a-\mu_C(x_C)\right)^2
+
\frac{\gamma}{|C|-1}
\sum_{j\in C\setminus\{i\}}
\left(a-\mu_C(x_C)-\delta_{ij}\right)^2 .
\]
Let
\[
t=a-\mu_C(x_C).
\]
The continuation problem is therefore equivalent to minimizing
\[
t^2
+
\frac{\gamma}{|C|-1}
\sum_{j\in C\setminus\{i\}}
(t-\delta_{ij})^2
\]
with respect to \(t\).

Under truthful reporting, \(\delta_{ij}=0\) for every \(j\in C\setminus\{i\}\).
Choosing \(t=0\) then makes the displayed expression equal to zero. Since the
expression is nonnegative for every \(t\), this is the global minimum.

If at least one message is non-truthful, then \(\delta_{ij}\neq 0\) for at least one
neighbor \(j\). In that case, no value of \(t\) can make all squared terms equal
to zero, because the first term requires \(t=0\), while the \(j\)-th coordination
term requires \(t=\delta_{ij}\neq 0\). Hence the minimized value is strictly positive.

Since \(\tau_i>0\), \(\delta_{ij}=0\) is equivalent to \(m_{ij}=x_i\). Therefore the
unique optimal message vector is
\[
m_{ij}=x_i
\qquad\text{for every }j\in C\setminus\{i\}.
\]

Thus truthful reporting gives agent \(i\) the lowest possible continuation loss,
even when she is allowed to choose her subsequent action optimally after
remembering the whole vector of messages she sent. Any non-truthful message to
at least one neighbor strictly increases the minimized continuation loss.

Therefore no joint deviation in messages and action is profitable for agent
\(i\). Since \(i\) was arbitrary, truthful reporting is sequentially optimal for
every agent in every clique component. Together with the action rule
\(\mu_C(x_C)\), this yields a truthful continuation equilibrium.
\end{proof}

\subsection*{Proof of Proposition~\ref{prop:core_assortative}}

\begin{proof}
Fix the precision ordering
\[
\tau_1\geq \tau_2\geq \cdots \geq \tau_n .
\]
Let
\[
\mathcal C^*=\{C_1,\ldots,C_K\}
\]
be a clique partition generated by the recursive procedure. 
Throughout the proof, fix one clique partition generated by the recursive procedure.
If there are multiple maximizers at some step, the argument applies to any chosen
maximizer. Permutations among agents with identical precision do not affect the
result.

We first establish a useful bound. Consider any agent \(i\) in any feasible
deviation, and suppose that \(i\) has degree \(d\) in the deviating network. Then
\(i\) can receive direct messages from at most \(d\) other agents and therefore can
base her action on at most \(d+1\) private signals, including her own. If some of
these messages are not truthful, or are only partially informative, the
informational term cannot improve relative to the case in which the same signals
are truthfully observed. Moreover, the coordination term is always weakly negative
and is maximized at zero. Hence \(i\)'s ex-ante expected payoff in any such
deviation is bounded above by the payoff she would obtain if she observed the best
possible \(d+1\) signals and incurred zero coordination loss.

We now apply this bound recursively.

First consider the first clique \(C_1\). Write
\[
C_1=\{1,\ldots,q_1\},
\]
up to permutations of agents with identical precision. Under the truthful
continuation equilibrium from Proposition~\ref{prop:clique_truthful}, all members
of \(C_1\) observe the same signal profile \(x_{C_1}\), choose the same posterior
action, and incur zero coordination loss. Hence every member of \(C_1\) obtains
the ex-ante expected payoff
\[
u(C_1)
=
-\frac{1}{\sigma_\theta+\sum_{\ell=1}^{q_1}\tau_\ell}
-
c(q_1-1).
\]

Now take any member \(i\in C_1\) and any feasible deviation in which \(i\) has
degree \(d\). By the bound above, \(i\)'s ex-ante expected payoff in that deviation
is at most
\[
-\frac{1}{\sigma_\theta+\sum_{\ell=1}^{d+1}\tau_\ell}
-
c(d),
\]
because the largest total precision attainable with \(d+1\) signals is the sum of
the \(d+1\) highest precisions. By construction, \(q_1\) maximizes
\[
-\frac{1}{\sigma_\theta+\sum_{\ell=1}^{q}\tau_\ell}
-
c(q-1)
\]
over all feasible clique sizes \(q\). Therefore, for every \(d\),
\[
-\frac{1}{\sigma_\theta+\sum_{\ell=1}^{d+1}\tau_\ell}
-
c(d)
\leq
-\frac{1}{\sigma_\theta+\sum_{\ell=1}^{q_1}\tau_\ell}
-
c(q_1-1)
=
u(C_1).
\]
Thus no member of \(C_1\) can obtain an ex-ante expected payoff strictly larger
than \(u(C_1)\) in any feasible deviation.

Now consider any candidate core outcome that does not contain a clique generating
the same precision-cost combination as \(C_1\), up to permutations of agents with
identical precision. Then at least one member of \(C_1\) must obtain an ex-ante
expected payoff strictly below \(u(C_1)\). The coalition \(C_1\) can then deviate,
form the clique \(C_1\), and use the truthful continuation equilibrium from
Proposition~\ref{prop:clique_truthful}. This weakly improves every member of
\(C_1\) and strictly improves at least one member. Hence every core outcome must
contain \(C_1\), up to payoff-equivalent permutations of agents with identical
precision.

Remove \(C_1\) from the population and consider the residual set of agents. Let
\(C_2\) be the clique selected by the recursive procedure in this residual
population. The same argument applies. For any member \(i\in C_2\), any feasible
deviation that does not involve members of \(C_1\) gives \(i\) access to at most
\(d+1\) signals if \(i\)'s degree is \(d\). The largest total precision attainable
with \(d+1\) such signals is obtained by the \(d+1\) highest-precision residual
agents. Since \(C_2\) is chosen to maximize the corresponding precision-cost
tradeoff in the residual population, no member of \(C_2\) can obtain an ex-ante
expected payoff strictly larger than \(u(C_2)\). Therefore, if a candidate core
outcome contains \(C_1\) but not \(C_2\), the coalition \(C_2\) blocks it by
forming the truthful clique \(C_2\).

Iterating this argument, every core outcome must contain
\[
C_1,C_2,\ldots,C_K,
\]
up to payoff-equivalent permutations of agents with identical precision. Hence
every core outcome coincides with the recursive assortative clique partition.

It remains to show that the recursive truthful assortative clique profile is
unblocked. Suppose, toward a contradiction, that some coalition \(B\subseteq N\)
blocks \(\mathcal C^*\). Let \(i\) be the highest-ranked agent in \(B\), and let
\(C_k\) be the clique of \(\mathcal C^*\) containing \(i\).

At the stage at which \(C_k\) was selected, all agents with precision strictly
larger than \(\tau_i\) had already been assigned to earlier cliques. Since \(i\) is
the highest-ranked member of \(B\), the blocking coalition cannot use any of those
earlier agents. Thus any deviation available to \(B\) that affects \(i\)'s payoff
uses only agents in the residual population considered at step \(k\).

Suppose that \(i\) has degree \(d\) in the blocking deviation. By the same upper
bound as above, \(i\)'s ex-ante expected payoff in the deviation is at most
\[
-\frac{1}{\sigma_\theta+\sum_{\ell=1}^{d+1}\tau^k_\ell}
-
c(d),
\]
where
\[
\tau^k_1\geq \tau^k_2\geq \cdots
\]
are the precisions in the residual population at step \(k\). Since the recursive
procedure selected \(C_k\) to maximize this expression over all feasible degrees,
\(i\) cannot obtain an ex-ante expected payoff strictly larger than the one she
receives in \(C_k\).

Thus, in any blocking deviation, \(i\) must be exactly indifferent. Exact
indifference can occur only if the deviation gives \(i\) the same
precision-cost combination selected at step \(k\): the same number of links and
the same maximal total precision among the corresponding number of residual
signals. By the tie-breaking convention, this means reproducing \(C_k\), up to
permutations of agents with identical precision. But then no lower-ranked member
of \(B\) can be made strictly better off through this deviation without changing
the precision-cost combination faced by \(i\) or moving \(i\) below her
equilibrium payoff. This contradicts the assumption that \(B\) blocks.

Therefore no coalition blocks the recursive truthful assortative clique profile.
Together with the first part of the proof, this establishes that the recursive
truthful assortative clique profiles are exactly the core outcomes, up to
permutations of agents with identical precision.
\end{proof}

    \end{document}